\documentclass[runningheads]{llncs}

\usepackage{amsmath, amsfonts, amssymb, amsbsy}
\usepackage{color, colortbl, tcolorbox}
\usepackage{url, array, cite, multirow, mathtools, framed, newfloat,microtype,algpseudocode,algorithm}
\usepackage[OT1]{fontenc}
\usepackage{diagbox, booktabs, appendix, alltt}
\usepackage{listings, xcolor, afterpage, bm, xspace}
\usepackage[shortlabels]{enumitem}
\usepackage[multiple]{footmisc}
\usepackage[cal=boondox]{mathalfa}
\usepackage[capitalise]{cleveref}
\usepackage{caption}
\usepackage{derivative}
\usepackage[italicdiff]{physics}
\usepackage[utf8]{inputenc}

\newcommand{\tx}{\mathsf{tx}\xspace}
\newcommand{\Tx}{\mathsf{Tx}\xspace}
\newcommand{\OPT}{\mathrm{OPT}}
\newcommand{\A}{\mathrm{A}}
\newcommand{\FA}{\textsc{FlushAll}}

\newcommand{\FWF}{\textsc{FlushWhenFull}}
\newcommand{\FTWF}{\textsc{FlushTwoWhenFull}}
\newcommand{\FWFabbr}{\textsc{FWF}}
\newcommand{\D}[3]{\CostF{#3}{\MDisc{#1}{#2}}{\Tx}}
\newcommand{\G}[3]{\CostF{#3}{\MGen{#1}{#2}}{\Tx}}

%% Usage Algorithm, Model, Input:
\newcommand{\CostF}[3]{#1[#2; #3]}
\newcommand{\Cost}[2]{#1[#2]}
%% Models
\newcommand{\MDisc}[2]{\mathsf{Disc}^{#1}_{#2}}
\newcommand{\MGen}[2]{\mathsf{Gen}^{#1}_{#2}}

%\pagestyle{plain}

% correct bad hyphenation here
\hyphenation{Ether-eum led-ger block-chain}

\pagestyle{plain}

\begin{document}

\title{Competitive Policies for Online Collateral Maintenance}

\author{Ghada Almashaqbeh\inst{1} \and Sixia Chen\inst{2} \and Alexander Russell\inst{1,3}}

%
% First names are abbreviated in the running head.
% If there are more than two authors, 'et al.' is used.
%
\institute{University of Connecticut, \email{\{ghada, acr\}@uconn.edu}\\
\and
Adelphi University, \email{schen@adelphi.edu}
\and
IOG}

\maketitle

\begin{abstract}
Layer-two blockchain protocols emerged to address scalability issues related to fees, storage cost, and confirmation delay of on-chain transactions. They aggregate off-chain transactions into fewer on-chain ones, thus offering immediate settlement and reduced transaction fees. To preserve security of the underlying ledger, layer-two protocols often work in a collateralized model; resources are committed on-chain to backup off-chain activities. A fundamental challenge that arises in this setup is determining a policy for establishing, committing, and replenishing the collateral in a way that maximizes the value of settled transactions.

In this paper, we study this problem under two settings that model collateralized layer-two protocols. The first is a general model in which a party has an on-chain collateral $C$ with a policy to decide on whether to settle or discard each incoming transaction. The policy also specifies when to replenish $C$ based on the remaining collateral value. The second model considers a discrete setup in which $C$ is divided among $k$ wallets, each of which is of size $C/k$, such that when a wallet is full, and so cannot settle any incoming transactions, it will be replenished. We devise several online policies for these models, and show how competitive they are compared to optimal (offline) policies that have full knowledge of the incoming transaction stream. To the best of our knowledge, we are the first to study and formulate online competitive policies for collateral and wallet management in the blockchain setting.
\end{abstract}

\section{Introduction}
\label{sec:intro}
Distributed ledger technology has provided a financial and
computational platform realizing an unprecedented combination of trust
assumptions, transparency, and flexibility. Operationally, these
platforms introduce two natural sources of ``friction'': settlement
delays and settlement costs. The Bitcoin protocol, for example,
provides rather lackluster performance in both dimensions, with
nominal settlement delays of approximately one hour and average fees
of approximately 1 USD per transaction. Layer-two protocols have been
the ready response to these complaints as they can provide instant
settlement and, furthermore, can significantly reduce transaction
costs by aggregating related off-chain transactions so that they
ultimately correspond to fewer underlying ledger, or on-chain,
transactions. Examples of such protocols include payment channels and
networks~\cite{Decker15,Poon15}, probabilistic
micropayments~\cite{Pass15,Chiesa17,microcash}, state channels and
networks~\cite{chakravarty2020hydra,dziembowski2018general,miller2019sprites},
and rollups~\cite{Poon17,gluchowski2019zk}.

However, in order for layer-two protocols to provide these remarkable
advantages without sacrificing the security guarantees of the
underlying ledger, they must \emph{collateralize} their activities. In
particular, there must be resources committed on-chain that provide
explicit recourse to layer-two clients in the event of a malicious or
faulty layer-two peer or server. Moreover, the total value of the
on-chain collateral must scale with the value of ``in flight''
transactions supported by the layer-two protocol.

These considerations point to a fundamental challenge faced by
layer-two protocols: determining a policy for establishing,
committing, and replenishing the collateral. Such a policy must ensure
sufficient available collateral to settle anticipated transaction
patterns while minimizing the total collateral and controlling the
resulting number of on-chain transactions. Of course, any fixed
collateralization policy can be frustrated by the appearance of an
individual transaction---or a sudden burst of transactions---that
exceeds the total current collateral. More generally, it would appear
that designing a satisfactory policy must rely on detailed
information about future transaction size and frequency, i.e., transaction distribution. From a
practical perspective, this poses a serious obstacle because
real-world transaction patterns are noteworthy for their
unpredictability and mercurial failure to adhere to a steady
state. Analytically, this immediately calls in to question the value
of distribution-specific solutions. These considerations motivate us
to elevate \emph{distribution independence} as a principal design
consideration for collateral policies.

We formulate a distribution-independent approach by adapting to our
setting the classical framework of competitive analysis. In
particular, we study two natural models: the \emph{$k$-wallet} model
in which the total collateral $C$ is divided among $k$ wallets of
fixed size, and a general model in which $C$ is viewed as one wallet
that allows replenishment of any portion of $C$. After fixing only two
parameters of the underlying system---the total collateral $C$ and the
size $T$ of the largest transaction that we wish to support---we
measure the performance of a given collateral policy against the
performance of an \emph{optimal, omniscient} policy. This optimal
policy utilizes the same total collateral, but has full knowledge of
the future sequence of transactions as it commits and replenishes
collateral. Naively, this would appear to be an overly ambitious
benchmark against which to measure an algorithm that must make choices
on the fly based only on the past sequence of transactions. Our
principal contribution is to show that the natural policies for these
two models perform well, even when compared against this high bar.

\subsection{Contributions}
Our formal modeling is intended to reflect the challenges faced by
standard layer-two protocols. The most immediate of the models we
consider arises as follows: Consider a layer-two protocol with a total
of $C$ collateral that must serve an unknown transaction sequence
$\Tx = (\tx_1, \tx_2, \ldots)$. As each transaction arrives, the
policy may either commit a corresponding portion of its available
collateral to \emph{settle} this transaction or simply \emph{discard}
it; in particular, in any circumstances where there isn't sufficient
uncommitted collateral to cover a given transaction, the transaction
must be discarded. The policy may also---whenever it
chooses---replenish its currently committed collateral. This ``flush''
procedure returns the committed collateral to the available pool of
collateral after a fixed time delay $F$ and involves a fixed cost
$\tau$ (so transactions arriving during $F$ will be discarded if no
other sufficient collateral is available). Thus, the challenge is to
schedule the flush events so as to minimize the total cost while
simultaneously maximizing the total value of settled transactions.

We remark that transactions ``discarded'' in the model above would
typically be handled by some other fallback measure in a practical
setting. The flush operation, in practice, corresponds to on-chain
settlement of a family of transactions that releases the associated
collateral so that it can be reused as surety for additional
transactions. While we assume that the flush procedure is associated
with a fixed, constant cost for simplicity, in practice this cost may
scale with the complexity of the aggregated transactions. We remark
that a fixed cost directly models Lightning-like payment channels and
networks, or escrow-based probabilistic micropayments, where the total
number of participants is bounded.\footnote{On-chain transaction cost
  also varies based on network conditions; during periods of high
  activity or congestion, transaction issuers may resort to increasing
  transaction fees to incentivize miners to prioritize their
  transactions. As such, $\tau$ above is viewed as the average
  transaction cost.}

In this general setting, we study the natural family of policies
determined by a parameter $\eta \in (0,1)$ that settle transactions as
they arrive until an $\eta$-fraction of all collateral is consumed; at
this point the committed collateral is flushed and the process is
continued with the remaining collateral. Our analytic development
first focuses on a simpler variation---of interest in its own
right---that we call the \emph{$k$-wallet} problem. As above, the
policy is challenged to serve a sequence $\Tx$ of transactions with a
total of $C$ collateral; however, the collateral is now organized into
$k$ wallets, each holding $C/k$ collateral, with the understanding
that an entire wallet must be flushed at once. When a wallet is
flushed it becomes entirely unavailable for settlement---regardless of
how much of the wallet was actually committed to settled
transactions---until the end of the flush period $F$, when the
collateral in the wallet is again fully available for future
settlement. As above, the policy may settle a transaction by
committing a portion of collateral in one of the wallets corresponding
to the size of the transaction. This version of the problem has the
advantage that performance is captured by a single quantity: the total
value of settled transactions.

\subsection{A Survey of the Results} 
Continuing to discuss the $k$-wallet model, we consider a sequence
$\Tx$ of transactions, each of value no more than $T$. We focus on the
natural $\FWF$ policy, which maintains a \emph{single active wallet}
(unless all wallets are currently unavailable) that is used to settle
all arriving transactions; if settling a transaction would leave
negative residual committed collateral in the active wallet, the
wallet is flushed and a new wallet is activated as soon as one becomes
available. We prove that this simple, attractive policy settles at
least a fraction
\[
  \frac{1 - kT/C}{1 + 1/k}
\]
of the total value settled by an optimal, offline strategy with $C$
collateral, even one that is not restricted to a $k$-wallet policy but
can flush any portion of its collateral at will. We remark that this
tends to optimality for large $k$ and small $T < C/k$. This result
also answers a related
% (practical)
question: that of how many wallets one should choose for a given total
collateral $C$ and maximum transaction size $T$. We find that optimal
$k$ in this case is $\approx \sqrt{1 + C/T}-1$.

As for the more flexible setting---under the general $C$ collateral model---where the policy may flush any
portion of its collateral at will by paying a transaction fee $\tau$,
recall that this poses a bicriteria challenge: maximizing settled
transactions while reducing settlement fees. We study this by
establishing the natural figure of merit that arises by assuming that
each settled transaction yields positive utility to the policy that
scales with its value (e.g., a ``profit margin''). Thus, the policy
seeks to maximize $p V - \tau f$, where $V$ is the total value of
settled transactions, $f$ is the total number of flushes, and $p$ is
the profit margin. Here we study the family of policies that flush
when currently committed (but unflushed) collateral climbs to an
$\eta$-fraction of $C$ ($\eta$ is a policy parameter). We find
that this policy achieves total utility of at least $1/\alpha$
fraction of that achieved by the optimal omniscient policy, where
\[
  \alpha =  \frac{1}{1-\eta  - T/C} \cdot \frac{p/\tau-1/C}{p/\tau-1/(\eta C)}\,.
\]
In this case, we are also able to determine the optimal constant $\eta^*$ (as a function of $C$, $p$, and $\tau$) that maximizes the policy utility:
\[
  \eta^* = \sqrt{(1 - T/C)\cdot \tau/(pC)}\,.
\]
We remark that our results in the $k$-wallet setting can also be
applied to directly yield results with this accounting that assigns a
flush cost and a profit margin.

We study some additional questions that arise naturally. For example,
we show that no deterministic, single wallet policy can be competitive
if the maximum transaction size can be as large as the wallet size and
show that, on the other hand, a natural randomized algorithm is
$O(1)$-competitive.

\medskip
 
\subsection{Applications} 
Online collateral management arises in various layer-two protocols, as
well as in Web 3.0 and decentralized finance (DeFi) applications. For
layer-two protocols, payment networks are an emblematic example: A
relay party creates payment channels with several parties, allowing
her to relay payments over multi-hop routes. Each payment channel is
tied to a collateral $C$ such that the relay cannot accept a
transaction to be relayed if the remaining collateral cannot cover
it. This applies as well to state channels, where transactions created
off-chain---while the channel is active---are accepted only if their
accumulated value does not exceed the initial fund committed when the
channel was created. These configurations adhere to the general
collateral model discussed above.

Probabilistic micropayments follow a slightly different
setting. Micropayments are usually used to permit service-payment
exchange without a trusted party to reduce financial risks in case of
misbehaving entities. A client creates an escrow fund containing the
collateral backing all anticipated payments to a set of servers. A
server provides a service to the client (e.g., file storage or content
distribution) in small chunks, so that the client pays a micropayment
for each chunk. For any incoming service exchange, the client cannot
take it unless her collateral can pay for it. The client can decide to
replenish the escrow fund to avoid service interruption, thus this
also follows the general collateral model. The client may also choose
to divide her collateral among several escrows, each of which has a
different or similar setting with respect to, e.g., the set of servers
who can be paid using an escrow and the total service payment
amount. This configuration follows the $k$-wallet model.

Apart from layer-two scalability solutions, online collateral
management captures scenarios related to Web 3.0 and DeFi
applications. The framework of decentralized resource markets build
systems that provide digital services, e.g., file storage, content
distribution, computation outsourcing or video transcoding, in a fully
decentralized way~\cite{filecoin,livepeer,golem}. Due to their
open-access nature, where anyone can join the system and serve others,
these systems usually involve some form of collateral. In this case, a
collateral represents the amount of service a party wants to pledge in
the system. For example, in Filecoin~\cite{filecoin}---a distributed
file storage network---a storage server commits collateral proportional
to the amount of storage she claims to own. This server cannot accept
more file storage contracts, and subsequently more storage payments,
than what can be covered by the pledged storage (or alternatively
collateral).

In the DeFi setting, online collateral management is encountered in a
variety of applications. Loan management is a
% natural
potential
example~\cite{gan2022understanding,saengchote2023decentralized};
incoming loan requests cannot be accepted unless the loan funding pool
can support them. The loan DeFi application then has to decide a
policy for loan request accept criteria (to favor some requests over
others under the limited funding constraint) and when to replenish the
loan pool balance.

Another potential application, of perhaps an extended version of our
models and policies, that we believe to be of interest is the case of
automated market makers (AMMs)~\cite{xu2023sok}. Here, a liquidity
pool trades a pair of tokens against each other, say token $A$ and
token $B$, such that a trade buying an amount of token $A$ pays for
that using an amount of token $B$, and vice versa. Incoming trades are
accepted only if the liquidity pool can satisfy them, so in a sense
having tokens that can serve the requested trades is the
collateral. Replenishing the pool fund, or liquidity, can be done
organically based on the trades. That is, a particular trade, say to
buy $A$ tokens, reduces the backing fund of token $A$ while increasing
it for token $B$. Another approach for pool replenishment is via
liquidity providers; particular parties provide their tokens to the
pool to serve incoming trades (or token swaps) in return for some
commission fees. These providers can configure when their offered
liquidity can be used, i.e., at what trading price range, under what
is called concentrated liquidity as in Uniswap~\cite{uniswap}. An
interesting open question is to develop competitive collateral
policies that capture this setting where settling a transaction does
not only depend on whether the remaining collateral $C$ (i.e., pool
liquidity) can cover it, but also on transaction-specific parameters
to meet certain collateral-related conditions. Even the replenishment
itself, i.e., providing liquidity, could be subject by other factors
such as the resulting price slippage, so an incoming mint transaction
(in the language of AMMs) that provides liquidity may not be accepted
immediately. We leave these questions as part of our future work
directions.

In general, our work lays down foundations for wallet management to
address issues related to robustness, availability, and profitability
of the wallet(s) holding the collateral. Maintaining one wallet may
lead to periods of interruption; a party must wait for a while before
a new wallet is created to replace an older expired one. Maintaining
several wallets may help, but given the cost of locking currency in a
wallet or renewing it, the number of active wallets and their
individual balances must be carefully selected. Moreover, under this
multi-wallet setting, it is important to consider how incoming
transactions are matched to the wallets, and whether factors such as
payment amount or frequency may impact this decision. A potential
extension to our model is considering adaptive policy management,
where the size of the collateral and the number of wallets can be
adjusted after each flush decision to account for these varying
factors.

%%% Local Variables:
%%% mode: latex
%%% TeX-master: "../main"
%%% End:

\section{The Model; Measuring Policy Quality}
\label{sec:model}

As discussed above, we consider the problem of designing an online
collateral management policy in which a collateral fund of initial
value $C$ is used to settle transactions---each with a positive real
value no more than $T$---chosen from a sequence
$\textsf{Tx} = (\textsf{tx}_1, \textsf{tx}_2, \ldots)$. Operationally,
the policy is presented with the transactions one-by-one and, as each
transaction arrives, it must immediately choose whether to
\emph{settle} the transaction or \emph{discard} it. Settling a
transaction requires committing a portion of the collateral equal to
the value of the transaction; such committed collateral cannot be used
to settle future transactions. Of course, if there isn't sufficient
uncommitted collateral remaining to settle a given transaction when it
arrives, the transaction must be discarded. Committed collateral may
be returned to service by an operation we call a \emph{flush}; we
focus on two different conventions for the flush operation, described
below, but in either case the collateral only becomes available for
use after a fixed time delay $F$.

We assess the performance of a particular online policy $\A$ against
that of an \emph{optimal offline policy} $\OPT$ that knows the full
sequence $\textsf{Tx}$ and can make decisions based on this knowledge.

Below, we describe two models for the collateral: the discrete $k$-wallet model and the general collateral model.

\subsection{The Discrete $k$-Wallet Model}
The $k$-wallet model calls for the collateral to be divided into $k$
wallets, each with $C/k$ collateral value. Wallets support two
operations: (i) a wallet with uncommitted collateral $R$ may
immediately settle any transaction $\tx$ of value $v \leq R$; this
reduces the available collateral of the wallet to $R - v$, and (ii) a
wallet may be flushed, which takes the wallet entirely offline for a
flush period $F$ after which the available collateral $R$ is reset to
$C/k$. As a matter of bookkeeping, we mentally organize time into
short discrete slots indexed with natural numbers: we then treat the
transaction $\tx_t$ as arriving at time(slot) $t$, and set $\tx_t = 0$
for times $t$ when no transactions arrive. We treat the flush
period as a half-open and half-closed interval: if a wallet flushes at
time $t$, then it is offline during the time interval $(t, t+F]$.  In
this model, the figure of merit is the \emph{total value} of settled
transactions. We let $\MDisc{C,k}{T}$ denote this discrete $k$-wallet
model with maximum transaction size $T$.\medskip

\noindent\textbf{Settlement algorithms, settled value, and the competitive
  ratio.} A \emph{$k$-wallet settlement algorithm $\A$} is an algorithm
that determines, for any transaction sequence $\Tx$, whether to
settle each transaction, which wallet to use, and when to flush each
wallet. For such an algorithm $\A$ and a sequence
$\Tx = \tx_1, \tx_2, \ldots, \tx_n$ we let
$\CostF{\A}{\MDisc{C,k}{T}}{\Tx}$ denote the total value of all
transactions settled by the algorithm. In general, we use the notation
$\CostF{\A}{\mathcal{M}}{\Tx}$ to denote the value achieved by
algorithm $\A$ in model $\mathcal{M}$ with input sequence $\Tx$. When
the model is clear from context, we simply write $\Cost{\A}{\Tx}$.

We say that an algorithm $\A$ is \emph{online} if, for every $N$, any
decisions made by the algorithm at time $N$ depend only on 
$\tx_1, \tx_2, \ldots, \tx_N$, i.e., transactions seen so far. We let $\OPT$ denote the optimal
(offline) policy; thus $\CostF{\OPT}{\MDisc{C,k}{T}}{\Tx}$ denotes the
maximum possible value that can be achieved by any policy, even one
with a full view of all (past and future) transactions.

\begin{definition}
  We say that an algorithm $\A$ is \emph{$\alpha$-competitive} in the
  $k$-wallet model if, for any sequence $\Tx = \tx_1, \ldots, \tx_n$
  with maximum value no more than $T$,
  \[
    \CostF{\OPT}{\MDisc{C,k}{T}}{\Tx} \leq \alpha \cdot \CostF{\A}{\MDisc{C,k}{T}}{\Tx}  + O(1)\,,
  \]
  where the constant in the asymptotic notation may depend on the
  model parameters ($C$, $k$, and $T$), but not the sequence
  $\Tx$ or its length $n$.
\end{definition}

\begin{remark}[Relation to the bin packing problem]
  We remark on the relationship between our problem and the
  well-studied online bin packing
  problem~\cite{seiden2002online,darapuneni2012survey}, where an
  algorithm must pack arriving objects into bins of constant size,
  while \emph{opening} a new bin any time a newly arriving object does
  not fit into any of the current bins. In this context, the
  $k$-wallet model calls for a bounded number of \emph{bins} (a.k.a.,
  wallets) that can only be reset with the flush operation. Also, we
  measure the total settled value rather than the number of utilized
  bins. In any case, we adopt the standard classical paradigm of
  competitive analysis to study our algorithms, as described previously.
\end{remark}

\subsection{The General Collateral Model} In contrast to the discrete
$k$-wallet model, where each wallet must be flushed as a whole, the
general setting permits any portion of the collateral to be flushed at
any time. The basic framework is identical: the policy is presented with a
sequence of transactions $\tx_1, \tx_2, \cdots$ and must decide
% as each transaction arrives
whether each transaction will be settled or discarded; the total
collateral $C$ and the maximum transaction size $T$ are parameters of
the problem. Settling a transaction requires committing collateral of
value equal to the transaction; however, any portion of the committed
collateral can be flushed at any time. As before, each flush period is
$F$ and is defined to be a half-open and half-closed time interval. We
denote this model as $\MGen{C}{T}$.

Since there is no penalty for flushing collateral in this model, it is
clear that any algorithm may as well immediately flush any committed
collateral. Despite the simple appearance of the model, it is still
useful to consider this setting as a comparison reference point for $k$-wallet
policies, and we define $\CostF{\A}{\MGen{C}{T}}{\Tx}$ to be the total
value of transactions settled by algorithm $\A$ in this general model
for a transaction sequence $\Tx$ (with total collateral $C$ and maximum
transaction size $T$).

\begin{definition}
  We say that an algorithm $\A$ is $\alpha$-competitive in the general
  collateral model if, for any sequence $\Tx = \tx_1, \ldots, \tx_n$
  with maximum value $T$,
  \[
    \CostF{\OPT}{\MGen{C}{T}}{\Tx} \leq \alpha
    \CostF{\A}{\MGen{C}{T}}{\Tx} + O(1)\,.
  \]
  where the $O(1)$ term may depend on model parameters but not on
  $\Tx$ or $n$.
\end{definition}

Note that for any algorithm $\A$
defined in the $k$-wallet model the following is always true:
\[ \D{C,k}{T}{A} \leq \D{C,k}{T}{\OPT} \leq \G{C}{T}{\OPT}\,. \]

A more natural model arises by introducing a cost for flushes. In
order to reflect the relative cost of flushes in the context of
settled transactions, we introduce two additional parameters:
\begin{enumerate}
	\item \emph{Profit margin $p$}: a profit $p \cdot v$ is gained when a transaction with value $v$ is settled.
	\item \emph{Flush cost $\tau$}: each flush operation costs $\tau$.
\end{enumerate}

We assume throughout that $pC > \tau$; otherwise there is no value to
settling transactions because the cost of even single flush exceeds
the total profit that can be accrued from the flushed collateral.

We let $\MGen{C;\tau}{T;p}$ denote this model, observing that
$\MGen{C}{T}$ and $\MGen{C;0}{T;1}$ coincide. In keeping with the
notation above, we let $\CostF{\A}{\MGen{C;\tau}{T;p}}{\Tx}$ denote
the total profit minus flush cost by applying algorithm $\A$ in the
general model with total collateral $C$, maximum transaction size $T$,
profit margin $p$, flush cost $\tau$, and transaction sequence
$\Tx$. Then, we have the following.

\begin{definition}
  We say that an algorithm $\A$ is $\alpha$-competitive in the general
  collateral model with flush costs if, for any sequence
  $\Tx = \tx_1, \ldots, \tx_n$ with maximum value $T$,
\[
\CostF{\OPT}{\MGen{C;\tau}{T;p}}{\Tx} \leq \alpha \cdot \CostF{\A}{\MGen{C;\tau}{T;p}}{\Tx}  + O(1)\,,
\]
where the $O(1)$ term may depend on the model parameters but not $\Tx$ or $n$.
\end{definition}
  
\noindent\textbf{Transaction size.} Our analysis identifies two regimes of interest regarding transaction costs (for both of the previous models): the ``micro-transaction'' setting, where $T \ll C$ (arising in micropayment applications) and
``arbitrary'' transaction size when $T \approx C$ (arising in more
general settings).

\begin{table}[t!]
  \caption{Summary of our results. Here $r = kT/C$, $\tau$ is the
    flush cost, $p$ is the profit margin, and $f$ is the number of
    flushes.}\label{tab:results}
\begin{tabular}{|lll|}
\hline
\multicolumn{3}{|c|}{\textbf{Discrete k-wallet model}} \\ \hline
\multicolumn{1}{|l|}{\multirow{2}{*}{$r < 1$}} & \multicolumn{2}{l|}{Theorem \ref{theorem: FA in general case}:  $\FA$ is $(2-r)/(1 - r)$-competitive} \\ \cline{2-3} 
\multicolumn{1}{|l|}{} & \multicolumn{2}{l|}{Theorem \ref{theorem: FWF in general case}: $\FWF$ is $(k+1)/(k(1- r))$-competitive} \\ \hline
\multicolumn{1}{|l|}{\multirow{3}{*}{$r = 1$}} & \multicolumn{1}{l|}{$k = 1$} & Theorem \ref{theorem: no deterministic algorithm is competitive when k is 1}: No competitive deterministic settlement algorithm\\ \cline{2-3} 
\multicolumn{1}{|l|}{} & \multicolumn{1}{l|}{\multirow{2}{*}{$k > 1$}} & Theorem \ref{theorem: FA in extreme case}: $\FA$ is 3-competitive \\ \cline{3-3} 
\multicolumn{1}{|l|}{} & \multicolumn{1}{l|}{} &Theorem \ref{theorem: FTWF in extreme case}: $\FTWF$ is $2(k+1)/k$-competitive   \\ \hline
\multicolumn{3}{|c|}{\textbf{General collateral model}} \\ \hline
\multicolumn{1}{|l|}{maximize $V$} & \multicolumn{2}{l|}{Corollary \ref{corollary: FWF is competitive in general model}: $\FWF$ is $(k+1)/(k(1- r))$-competitive} \\ \hline
\multicolumn{1}{|l|}{\begin{tabular}[c]{@{}l@{}}maximize \\ $pV - \tau f$\end{tabular}} & \multicolumn{2}{l|}{\begin{tabular}[c]{@{}l@{}}Theorem \ref{theorem: eta alg competitive ratio}:  $\A_\eta$ is $(1-\beta)/(\sqrt{1-T/C}-\sqrt{\beta})^2$-competitive, \\ where $\eta =  \sqrt{\beta(1-T/C)}$ and $\beta = \tau/pC$\end{tabular}}\\ \hline
\end{tabular}
\end{table}

In the next two sections, we analyze policy competitiveness under each model;  the discrete $k$-wallet model can be found in Section~\ref{sec:fixed} and the general collateral model can be found in Section~\ref{sec:general}. Table~\ref{tab:results} summarizes our results.

%%% Local Variables:
%%% mode: latex
%%% TeX-master: "../main"
%%% End:

\section{The Discrete $k$-Wallet Setting}
\label{sec:fixed}
We now formally consider the $k$-wallet setting. Our focal points are
two natural policies described next: $\FA$ and $\FWF$.\medskip

\subsection{The $\FA$ Algorithm} 
We begin with the simple $\FA$
algorithm, which uses $k$ wallets placed in (arbitrary, but fixed)
order $W_1, \ldots, W_k$. The algorithm packs transactions into its
wallets using the \emph{first fit} algorithm: each transaction is
settled by the first wallet (in the established order) that can fit
the transaction until a transaction arrives that cannot fit into any
wallet. At that time, all $k$ wallets are simultaneously flushed (and so during the flush period $F$ all incoming transactions will be discarded).

In the following theorems, we use $r$ to denote $kT/C$, which is the
ratio between the maximum transaction size and the wallet size. Note
that $r \leq 1$.

\begin{theorem}
\label{theorem: FA in general case}
$\FA$ is $(2-r)/(1 - r)$-competitive in the $\MDisc{C,k}{T}$ model,
where $r = kT/C$.
 \end{theorem}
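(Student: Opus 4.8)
The plan is to split the execution of $\FA$ into \emph{phases} delimited by its flush events and to account for the value settled by $\OPT$ phase-by-phase; throughout, write $\Cost{\A}{I}$ for the total value of transactions that an algorithm $\A$ settles among those arriving during a time interval $I$. Suppose $\FA$ flushes all its wallets at times $t_1 < t_2 < \dots < t_m$, where the $i$-th flush is triggered by a transaction of value $v_i \le T$ that arrives at $t_i$ and fits into no wallet. These events tile the timeline into active periods $A_1,\dots,A_{m+1}$ (with $A_i$ ending at $t_i$ for $i\le m$, and $A_{m+1}$ a possibly empty flushless tail) interleaved with flush periods $P_i=(t_i,t_i+F]$. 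Two simple observations drive the argument. First, inside $A_i$ the algorithm $\FA$ settles \emph{every} arriving transaction except the one at $t_i$ (by definition the phase runs until the first non-fitting transaction), so the total value arriving during $A_i\setminus\{t_i\}$ equals $\Cost{\FA}{A_i}$; since no policy settles more value than arrives, $\Cost{\OPT}{A_i\setminus\{t_i\}}\le \Cost{\FA}{A_i}$, and likewise $\Cost{\OPT}{A_{m+1}}\le\Cost{\FA}{A_{m+1}}$. Second, since the transaction at $t_i$ fits into no wallet, each of the $k$ wallets holds uncommitted collateral strictly below $v_i\le T$ at that moment, hence more than $C/k-T$ committed; summing, $\Cost{\FA}{A_i} > C-kT = C(1-r)$ for every $i\le m$.

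The remaining ingredient is a ceiling on what $\OPT$ can harvest from a flush period together with the transaction preceding it. I would establish the following model-independent fact: \emph{any} collateral policy settles at most $C$ total value over any time window of length at most $F$. Indeed, collateral committed anywhere inside such a window cannot return to service before the window ends (a flush takes time $F$), so every unit committed inside the window is still unavailable at its final slot; as only $C$ units of collateral exist, at most $C$ worth can have been committed --- hence settled --- inside the window. Applied to $\{t_i\}\cup P_i = [t_i,t_i+F]$ this gives $\Cost{\OPT}{\{t_i\}\cup P_i}\le C$; and since the fact holds for \emph{every} policy, it bounds even the unrestricted omniscient optimum underlying $\CostF{\OPT}{\MDisc{C,k}{T}}{\Tx}$.

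Putting the pieces together, partition $\Tx$ by the sub-interval in which each transaction arrives and apply the bounds above:
\[
  \CostF{\OPT}{\MDisc{C,k}{T}}{\Tx} \;\le\; \sum_{i=1}^{m}\Bigl(\Cost{\OPT}{A_i\setminus\{t_i\}} + \Cost{\OPT}{\{t_i\}\cup P_i}\Bigr) + \Cost{\OPT}{A_{m+1}} \;\le\; \Cost{\FA}{\Tx} + mC .
\]
Since $\Cost{\FA}{\Tx}\ge\sum_{i=1}^{m}\Cost{\FA}{A_i} > mC(1-r)$, we have $m < \Cost{\FA}{\Tx}/(C(1-r))$, and therefore
\[
  \CostF{\OPT}{\MDisc{C,k}{T}}{\Tx} \;<\; \Cost{\FA}{\Tx} + \frac{\Cost{\FA}{\Tx}}{C(1-r)}\cdot C \;=\; \frac{2-r}{1-r}\,\Cost{\FA}{\Tx} ,
\]
which is the claimed competitive ratio; the $O(1)$ slack in the statement absorbs trivial boundary effects of the discrete-slot and half-open-interval conventions.

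The step I expect to be the genuine obstacle --- the one place a careless argument breaks --- is pairing the triggering transaction at $t_i$ with its flush period $P_i$: bounding $\OPT$ on $P_i$ by $C$ and separately bounding $\OPT$ on the $t_i$ transaction by $T$ loses an extra $\tfrac{r/k}{1-r}$ and only yields a $\tfrac{2-r+r/k}{1-r}$ ratio, whereas noticing that $[t_i,t_i+F]$ is itself a length-$F$ window --- so $\OPT$ earns at most $C$ from the two \emph{jointly} --- produces the exact constant. The remaining care is in checking that the phases genuinely tile the timeline and that ``$\FA$ settles all of $A_i$ except the transaction at $t_i$'' is invoked only where valid (including the flushless tail $A_{m+1}$, where $\FA$ settles everything); these are routine once the decomposition is fixed.
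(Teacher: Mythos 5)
Your proof is correct and follows essentially the same route as the paper's: decompose time into epochs delimited by $\FA$'s flushes, lower-bound $\FA$'s per-epoch settled value by $C-kT$ via the no-fit condition, and upper-bound $\OPT$ by the arriving value plus $C$ for each length-$F$ flush window. Your handling of the triggering transaction (folding it jointly into the length-$F$ window rather than charging it separately) is in fact slightly more careful than the paper's, which silently absorbs that term.
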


 \begin{proof}
   For a sequence $\Tx$ of transactions, subdivide time into
   \emph{epochs} according to the behavior of the $\FA$ algorithm. The
   first epoch begins at time $0$ and continues through the first
   flush of the $k$ wallets; the epoch ends in the last timeslot of
   this flush period. Each subsequent epoch begins in the timeslot
   when the wallets come back online (that is, in the timeslot just
   after the previous epoch ends) and continues through the next flush
   to the end of the flush period. In general, there may be a final
   \emph{partial epoch} at the end of the transaction sequence; other
   epochs are referred to as \emph{full}. Any full epoch can be
   further broken into two phases: the \emph{accumulation phase} when
   all transactions are settled by $\FA$, and the \emph{flush phase},
   during which no transactions can be settled (as all wallets are
   offline).

   For any particular full epoch, let $V$ be the total value packed by
   $\FA$ into its wallets in the accumulation phase. We note that
   $V \geq k(C/k-T) = C-kT$, since every wallet will clearly be filled
   to at least $C/k- T$. As for $\OPT$, during the accumulation phase
   it can settle at most $V$ (as this is the value of all transactions
   appearing in that phase) and during the flush phase it can settle
   at most $C$ (as a unit of collateral can settle at most one
   transaction unit in any $F$ period).  Therefore, the ratio between
   the value settled by $\OPT$ and $\FA$ in a full epoch is no more
   than
   \[
     \max_{ C-kT \leq V \leq C} \frac{V+C}{V} \leq \frac{C-kT+C}{C-kT}
     =\frac{(2 - kT/C)}{(1 - kT/C)} = \frac{2-r}{1-r}\,.
   \]

   Moreover, the same formula above can be said
   for any partial epoch, since the accumulation phase comes first.

   Thus, the competitive ratio is $\alpha = (2-r)/(1 - r)$. Observe
   that when $r$ decreases, the competitive ratio approaches 2.
\end{proof}

Aside from the simplicity of the analysis, $\FA$ may have an advantage
for certain sequences of transactions in practice: keeping all $k$
wallets open during the epoch (rather than optimistically flushing
some earlier so as to bring new collateral online earlier) may permit
higher density packing of transactions into the wallets. Indeed, one
could consider leveraging an approximation algorithm for bin packing
for the purposes of optimizing this. On the other hand, in situations
where some of the wallets may become nearly full early in an epoch it
seems wasteful to wait to flush these wallets until all others are
full. This motivates the $\FWF$ algorithm, which attempts to more
eagerly flush wallets so as to bring them online sooner.

\subsection{The $\FWF$ Algorithm} 
We now consider the $\FWF$
algorithm, which fills wallets in a round-robin order. Specifically,
transactions are settled by a particular wallet until a new
transaction arrives that cannot fit; at that point the wallet is
immediately flushed, and the algorithm moves on to the next wallet in
cyclic order. (In cases where the next wallet is offline, the
algorithm waits for the wallet to finish its flush before processing
further transactions, so all transactions arriving during this wait period will be discarded.)

\begin{theorem}
\label{theorem: FWF in general case}
For $k > 1$, $\FWF$ is $(k+1)/(k(1- r))$-competitive in the
$\MDisc{C,k}{T}$ model, where $r = kT/C$.
\end{theorem}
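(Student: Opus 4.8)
The plan is to mimic the epoch-based analysis of $\FA$ but adapt it to the round-robin flushing behavior of $\FWF$. I would again subdivide time into epochs determined by the algorithm's flush pattern; however, because $\FWF$ flushes one wallet at a time rather than all at once, the natural notion of an epoch is the period between consecutive flushes of the \emph{same} wallet (or, alternatively, I would group together $k$ consecutive single-wallet flushes into one ``super-epoch'' so that each wallet is flushed exactly once). Within such a super-epoch, $\FWF$ fills wallet $W_i$ until a transaction arrives that does not fit, so each wallet receives at least $C/k - T$ in settled value; summing over $k$ wallets gives a lower bound of $C - kT$ on the value $V$ settled by $\FWF$ in the super-epoch, exactly as before.

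The first key step is to bound $\OPT$'s value over the same time window. Unlike the $\FA$ case, the flush periods of the individual wallets are staggered, so $\OPT$ never faces a moment where \emph{all} collateral is offline; this is precisely why we must pay the extra $(k+1)/k$ factor rather than hoping for something better. I would argue that over one super-epoch, the accumulation phases collectively expose transactions of total value at most $V + (\text{something})$; more carefully, during the portion of the super-epoch when wallet $W_i$ is offline, $\OPT$ can settle at most $C/k$ worth of transactions attributable to that wallet's ``missing'' collateral (a unit of collateral settles at most one unit of transaction per flush period $F$). The staggering means these offline windows for distinct wallets overlap in time but correspond to disjoint collateral, so the total extra value $\OPT$ can extract beyond the $V$ seen during pure accumulation is at most $C/k$ per wallet over the course of the super-epoch — but one must be careful because a single super-epoch spans only $k$ individual flush events, and I expect the bound to come out to $\OPT \le V + C + C/k$ or $\OPT \le V(1 + 1/k) + C$ type expressions; then $\max_{V \ge C - kT} \OPT/V$ yields $(k+1)/(k(1-r))$.

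The step I expect to be the main obstacle is correctly accounting for the \emph{boundary effects} between super-epochs: transactions that arrive while one wallet is mid-flush but the next wallet is actively accumulating, and the handoff at the start/end of the sequence. In the $\FA$ analysis the epochs are clean because all wallets flush and return simultaneously; here the overlapping flush intervals make the bookkeeping for $\OPT$ subtle, and one has to ensure that value $\OPT$ settles is not double-counted across the offline windows of different wallets. I would handle this by fixing a charging scheme: assign each transaction $\OPT$ settles to the super-epoch in which it arrives, then split $\OPT$'s super-epoch value into the part coincident with $\FWF$'s accumulation (bounded by $V$) and the part coincident with being ``short'' collateral (bounded by telescoping the per-wallet $C/k$ contributions). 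The partial final epoch is handled exactly as in the $\FA$ proof, since accumulation precedes flushing, and the additive $O(1)$ absorbs the endpoint irregularities. Once the charging is pinned down, the optimization $\max_{C - kT \le V} (V(1+1/k) + C)/V$ is routine and gives the stated ratio, which tends to optimality as $k \to \infty$ and $r \to 0$.
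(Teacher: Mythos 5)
Your proposal takes a genuinely different route from the paper, and the route has a real gap at exactly the point you flag as the obstacle. The paper does not decompose time into (super-)epochs at all: it assumes toward a contradiction that there is a \emph{first} time $t_e$ at which the cumulative ratio $V_\OPT(0,t_e]/V_{\FWFabbr}(0,t_e]$ exceeds $(k+1)/(k(1-r))$, observes that a transaction must then be rejected at $t_e$, and deduces that \emph{all $k$ wallets flush within the single window $(t_e-F, t_e]$}. It then exhibits one explicit window $I_s \cup I_{s'}$ ending at $t_e$ on which $V_\OPT \leq C/k + C$ (because $W_1$ is the unique active wallet on $I_s$, so $\OPT$ can do no better than $\FWF$ there and both are below $C/k$, while $I_{s'}$ has length at most $F$ so $\OPT$ gets at most $C$) and $V_{\FWFabbr} \geq k(C/k - T) = C - kT$. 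The ratio of these two \emph{absolute} bounds is exactly $(k+1)/(k(1-r))$, and the first-violation structure lets the prefix before the window be absorbed without any charging scheme.

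The gap in your version is twofold. First, the arithmetic of your candidate bounds does not produce the stated ratio: with $V \geq C - kT$, the bound $\OPT \leq V(1+1/k) + C$ gives $(k+1)/k + 1/(1-r)$ and $\OPT \leq V + C + C/k$ gives $1 + (k+1)/(k(1-r))$; both are strictly larger than $(k+1)/(k(1-r))$ (e.g., for $k=2$, $r=1/2$ the target is $3$ but your forms give $3.5$ and $4$). The tight bound is not of the shape $\max_V (V + \text{extra})/V$; it is a ratio of an absolute $\OPT$ bound $C + C/k$ to an absolute $\FWF$ bound $C - kT$ on a carefully chosen window. Second, and more fundamentally, the claim that $\OPT$'s excess over a super-epoch telescopes to about $C/k$ per wallet is the entire difficulty and is left unproved. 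A super-epoch of $\FWF$ can contain up to $k$ separate waiting intervals during which the next wallet is still mid-flush and $\FWF$ discards everything; each such interval can have length up to $F$, so the naive bound is $\OPT \leq V + kC$, not $V + C + C/k$. Ruling this out requires the structural observation that whenever $\FWF$ is forced to wait, all $k$ flushes are compressed into one window of length $F$ --- which is precisely what the paper's first-violation argument extracts and what your ``charging scheme'' would have to reconstruct. Also note that, contrary to your remark, $\FWF$ \emph{can} have all collateral offline simultaneously (a fast burst flushes all $k$ wallets within one $F$ period); that is the case the localized window argument is built to handle.
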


\begin{proof}
Assume, for the purpose of contradiction, that there is a time
   $t$ for which the interval $I = (0, t]$ satisfies
   \[
     V_\OPT (I) > (k+1)/(k(1- r)) \cdot V_\FWFabbr(I)\,,
   \]
   where $V_\OPT (I)$ and $V_\FWFabbr(I)$ are the total values of
   transactions $\OPT$ and $\FWF$ settle during $I$, respectively; let
   $t_e$ be the earliest such $t$.

   Since $t_e$ is the earliest such time, there must be a transaction
   $\tx$ {\em at} $t_e$ that is not settled by $\FWF$.  As $\FWF$ does
   not take $\tx$, it must be the case that either all wallets are
   offline at $t_e$ or $k-1$ wallets are already offline at $t_e$ and
   the remaining wallet goes offline at $t_e$ after failing to fit
   $\tx$.  Therefore, every wallet flushes during
   $I_f = (t_e-F, t_e]$. Suppose, without loss of generality, that
   they do so in order $W_1, W_2, \cdots, W_k$.

   If $t_e \leq F$, then $\OPT$ settles transaction value at most
   $C$ in the interval $(0, t_e]$ since each wallet settles at most
   $C/k$. In the same interval, $\FWF$ settles at least $k(C/k-T)$
   since each wallet settles at least $C/k-T$.  Therefore,
\[
     \frac{V_\OPT (I)} { V_\FWFabbr(I)} \leq \frac{C}{k(C/k-T)} = \frac{C}{C-kT} < \frac{kC}{kC-k^2T} + \frac{C}{kC-k^2T} = \frac{k+1}{k(1-r)}\,,
   \]
which would contradict our assumption.

Otherwise $t_e-F > t_0$. Observe that of the $k$ wallets, at least
$W_2, W_3, \cdots W_k$ began taking transactions during $I_f$ since,
if a wallet $W_i$'s transaction activity before its last flush starts
at a time before $I_f$ for any $i = 2,\cdots, n$, then $W_{i-1}$'s
last flush time must also be before $I_f$ which contradicts the
earlier conclusion that all the $k$ wallets' last flush times are
during $I_f$. Therefore, those $k-1$ wallets together contribute
$(k-1)(C/k- T)$ to $V_\FWFabbr(I_f)$.  The only wallet that may have
started taking transactions before $I_f$ is $W_1$.  Let $t_s$ denote
the last time before $t_e$ that $W_1$ came back online and $t_{s'}$
denote the time $W_1$ flushes.  Note that $t_{s'} \in I_f$, while
$t_s$ may or may not be in the interval.  Let $I_s = (t_s, t_{s'}]$
and $I_{s'} = (t_{s'}, t_e]$; then we have
$V_\FWFabbr(I_s \cup I_{s'}) \geq k(C/k-T)$ since each wallet starts
to take transactions and then flushes within the interval
$I_s \cup I_{s'}$. We also have
$V_\OPT(I_s) \leq V_\FWFabbr(I_s) < C/k$ since wallet $W_1$ is active
during $I_s$.

Additionally, we have $V_\OPT(I_{s'}) \leq C$ since the length of
$I_{s'}$ is no more than $F$, leading to
$V_\OPT(I_s \cup I_{s'}) \leq C/k + C$. Therefore,
  \[
\frac{V_\OPT(I_s \cup I_{s'})}{V_\FWFabbr(I_s \cup I_{s'}) } \leq \frac{C/k + C}{k(C/k - T)} = \frac{k+1}{k} \cdot \frac{C} {C-kT} 
=\frac{k+1}{k(1-r)}\,.
\]

But this contradicts our initial assumption; we conclude that there is
no such $t$.
\end{proof}

\subsection{Optimal Wallet Number} 
When $k$ is large and $r$ is small, $\FWF$ approaches optimality.  For
a given total collateral $C$ and maximum transaction size $T$, it is
natural to ask how many wallets one should choose so as to optimize
the competitive ratio of $\FWF$.  This amounts to determining a $k$
that minimizes $(k+1)/(k(1- kT/C))$.  By computing
\[
\frac{\partial}{\partial k} {\left(\frac{k+1}{k(1-kT/C)}\right)} = 0\,,
\]
we find that the optimal value $k^*$ for $k$ is $\sqrt{1+C/T} - 1$. Of
course, the actual number of wallets must be an integer. We remark
that if $k \approx \sqrt{C/T}$, then each wallet has size
$\approx \sqrt{CT}$ and the competitive ratio is approximately
\[
  \frac{\sqrt{C}+\sqrt{T}}{\sqrt{C}-\sqrt{T}}\,.
\]

\subsection{Remarks on the profit margin--transaction cost setting}
We remark that the competitive analyses above focusing on total settled value immediately give rise to a bound for the setting that introduces a profit margin $p$ and a flush cost $\tau$. Observe that, for any algorithm constrained to the $k$-wallet framework that settles total value $V$, the maximum profit is $V(p - \tau k/C)$, as only $C/k$ value can be settled in any single flush. Thus the profit of $\OPT$ is no more than $V_\OPT(p - \tau k / C)$. On the other hand, the profit of $\FWF$ is at least $V_\FWFabbr(p - \tau k/(C-kT)) - O(1)$, as each wallet is flushed with at least $C/k-T$ value (except for the last wallet, which may introduce a $O(1)$ additional penalty). It follows that the competitive ratio in the profit model is inflated by a factor
\[
  \frac{p/\tau - k / C}{p/\tau - k/(C - kT)}
\]
over that of the ``value-only'' $k$-wallet setting.

Note that the same argument can be applied to $\FA$ because each wallet is likewise flushed with at least $C/k - T$ value (except perhaps for the last flush event).

\subsection{Remarks on the Case $r = 1$}  
If the maximum transaction size can be as large as the wallet size, we make a few additional observations:
\begin{enumerate}
\item No deterministic algorithm can be competitive if there is only one wallet.
\item $\FA$ is 3-competitive.
\item $\FWF$ is not competitive, but a variation on the scheme that groups wallets into pairs can solve the problem.
\end{enumerate}

We prove these in the following.
\begin{theorem}
\label{theorem: no deterministic algorithm is competitive when k is 1}
There is no competitive, deterministic $1$-wallet settlement algorithm
if $r = 1$.
\end{theorem}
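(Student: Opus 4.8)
The plan is an adaptive-adversary argument. Since $r=1$ forces $T=C$, a single transaction can consume the entire wallet, and the key leverage is elementary: once the deterministic online algorithm $\A$ commits any amount $\epsilon>0$ of its one wallet, a subsequent transaction of value $C$ no longer fits, so $\A$ must either discard it or flush---idling for $F\ge1$ timeslots---whereas an offline policy that foresees the value-$C$ transaction simply declines the $\epsilon$-transaction and keeps its wallet free. Given any purported competitive ratio $\alpha$ and additive constant $c$, I would build (against the fixed $\A$) a transaction sequence on which $\OPT$ settles total value at least $mC$ while $\A$ settles at most $m\epsilon$, for a large integer $m$ and small $\epsilon>0$ chosen at the end.

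The sequence will consist of $m$ \emph{rounds}. In each round the adversary offers transactions of value $\epsilon$, one per timeslot, until $\A$ settles one of them; immediately afterwards it offers a single transaction of value $C$, and then $F$ empty timeslots, after which the next round begins. The point is that $\A$ settles exactly one $\epsilon$-transaction per round: the $C$-transaction cannot fit ($\A$ then has at most $C-\epsilon<C$ free collateral), and $\A$ cannot clear space for it within the round because a flush costs $F\ge1$ offline slots and the $C$-transaction is offered only once. Hence $\A$ settles at most $m\epsilon$ in total. On the other side, I exhibit the offline policy $S$ that in each round declines every $\epsilon$-transaction, settles the $C$-transaction, immediately flushes, and comes back online during the $F$ padding slots; $S$ settles exactly $C$ per round, so $\CostF{\OPT}{\MDisc{C,1}{T}}{\Tx}\ge mC$. (If $\A$ ever refuses $\epsilon$-transactions indefinitely inside a round, I truncate the sequence there: on a prefix with $t$ such $\epsilon$-transactions, $\A$'s settled value is bounded by what it has already settled, while the offline policy that greedily packs $\epsilon$-transactions and flushes when full settles value growing without bound in $t$---already contradicting competitiveness for large $t$.)

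To close the argument, assuming $\A$ is $\alpha$-competitive with additive constant $c$, I pick $m>2c/C$ and then $\epsilon$ small enough that $\alpha\epsilon\le C/2$ and $\epsilon\le C/(2m)$; the second inequality keeps the $\epsilon$-collateral that $\A$ accumulates across rounds at most $C/2$, so $\A$ always has room for the next $\epsilon$-transaction and all $m$ rounds complete. Then
\[
  \CostF{\OPT}{\MDisc{C,1}{T}}{\Tx}-\alpha\cdot\CostF{\A}{\MDisc{C,1}{T}}{\Tx}\;\ge\;mC-\alpha m\epsilon\;\ge\;\tfrac{1}{2}mC\;>\;c,
\]
a contradiction.

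The part I expect to require the most care is proving that a round really does confine $\A$ to value $\epsilon$: one has to pin down the flush-period convention (a wallet flushed at time $t$ is offline exactly on the slots $(t,t+F]$ with $F\ge1$, and its collateral is not reset until the period ends) to rule out $\A$ slipping in the $C$-transaction by flushing, and one has to check the round boundaries---that the $F$ padding slots always bring $S$ back online with an empty wallet before the next $C$-transaction arrives, and that $\A$'s residual commitments never derail the construction. The adaptivity of the adversary is legitimate precisely because $\A$ is deterministic, and it is exactly this adaptivity that forces $\A$ to ``guess wrong'' about whether to accept the $\epsilon$-transaction.
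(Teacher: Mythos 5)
Your proposal is correct and follows essentially the same strategy as the paper's proof: an adaptive adversary that feeds $\epsilon$-microtransactions until the deterministic algorithm settles one, immediately follows with a full-size transaction that no longer fits, waits out the flush period, and repeats, yielding a ratio of order $1/\epsilon$. The only cosmetic difference is how the "algorithm never settles a microtransaction" case is closed out (the paper caps each round at $1/\epsilon$ microtransactions so $\OPT$ still earns value $1$ that round, while you truncate the whole sequence), and your extra bookkeeping on $m$, $c$, and the accumulated $\epsilon$-collateral just makes explicit what the paper leaves implicit.
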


\begin{proof}
    For the sake of simplicity, we assume the wallet size and maximum transaction size are both 1. Fixing an online algorithm $\A$, consider 
    the following schedule of transactions:
\begin{itemize}
\item Begin with a rapid succession of one or more {\em microtransactions} each having size $\epsilon$, terminating with the first microtransaction that the algorithm chooses to settle.
\begin{enumerate}
\item If the algorithm does not choose to settle any of the microtransactions, end the succession after $1/\epsilon$ transactions.
\item If the algorithm {\em does} choose to settle one, follow it immediately with a transaction of size 1.
\end{enumerate}
\item Allow an interval of length $F$ to pass without any transactions.
\item Repeat indefinitely.
\end{itemize}

In any iteration of the above, either case 1 or case 2 applies.  In case 1, the online algorithm settles no transactions, while the optimal offline algorithm settles a total value of 1.  In case 2, the online algorithm settles a single transaction worth $\epsilon$ while the optimal offline algorithm settles a single transaction of size 1.  Therefore, the competitive ratio is no better than $1/\epsilon$. As $\epsilon$ can be chosen arbitrarily, it follows that the algorithm cannot achieve any fixed ratio.
\end{proof}

\begin{remark}
  A simple randomized algorithm can achieve constant competitive ratio
  when both $k$ and $r$ are 1.  We first show that $\FA$ with 2
  wallets is 2-competitive against $\OPT$ with one wallet. During each
  epoch, which extends from the time the two wallets come back online
  after the previous flush until the end of the next flush period,
  $\FA$ settles total value $V \geq 1$. On the other hand, $\OPT$ can
  settle at most $V+1$, that is, during the time $\FA$ settles
  transactions, $\OPT$ settles $V$, and during the flush time period of
  $\FA$, $\OPT$ packs 1. Therefore, the competitive ratio is
  $(V+1)/V \leq 2$. Now we will let our randomized algorithm that uses
  one wallet to simulate one of the wallets in the $\FA$ algorithm
  with 2 wallets. At each time when the wallet comes back online, we
  flip a coin, if it is heads, it simulates the first wallet in $\FA$,
  and if it is tails, it simulates the second wallet in $\FA$. That
  is, the wallet in the randomized algorithm only settles the
  transactions that are taken by the chosen wallet and ignores the
  other transactions. The expected value the randomized algorithm can
  pack in each epoch is half of what $\FA$ can pack. Hence the
  competitive ratio against one-wallet $\OPT$ is 4.

\end{remark}

\begin{theorem}
\label{theorem: FA in extreme case}
    For any number $k > 1$ of wallets $\FA$ is 3-competitive if $r = 1$. 
\end{theorem}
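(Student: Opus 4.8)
The plan is to mirror the epoch-based analysis of Theorem~\ref{theorem: FA in general case}, but to replace the now-vacuous bound $V \ge C - kT = 0$ with a new structural lower bound $V > C/2$ on the value $\FA$ accumulates in each full epoch; this immediately yields the claimed ratio, since $\OPT$ settles at most $V + C$ per full epoch (value at most $V$ during the accumulation phase and at most $C$ during the flush phase), and $(V+C)/V < 3$ whenever $V > C/2$.

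To establish $V > C/2$, I would examine the moment the epoch's flush is triggered and let $R_1, \dots, R_k$ be the residual (uncommitted) collateral then remaining in wallets $W_1, \dots, W_k$, so that $V = \sum_i (C/k - R_i) = C - \sum_i R_i$. First observe that every wallet has received at least one transaction during the epoch: the transaction $\tx$ that triggers the flush has value at most $T = C/k$ (this is where $r = 1$ is used), so a wallet with residual $C/k$ could settle it, contradicting the fact that $\tx$ fits nowhere. Next, for any pair $a < b$: since $\FA$ uses first fit in the fixed order $W_1, \dots, W_k$, the first transaction placed in $W_b$ during the epoch failed to fit $W_a$ at that time, hence its value strictly exceeds $W_a$'s residual collateral at that time, which is in turn at least $R_a$ (residuals are non-increasing within an epoch, as no wallet is flushed before the epoch ends). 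Since $W_b$'s committed total $C/k - R_b$ is at least the value of this first transaction, we conclude $C/k - R_b > R_a$, i.e.\ $R_a + R_b < C/k$ for every pair $a \ne b$.

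Summing this inequality over all $\binom{k}{2}$ pairs, each $R_i$ appears $k-1$ times, so $(k-1)\sum_i R_i < \binom{k}{2}\cdot C/k = (k-1)C/2$; dividing by $k - 1 > 0$ gives $\sum_i R_i < C/2$ and therefore $V = C - \sum_i R_i > C/2$. The remaining bookkeeping is routine: partition time into epochs exactly as in Theorem~\ref{theorem: FA in general case} (grouping the flush-triggering transaction into the flush phase, so that the flush phase of an epoch occupies a time window of length $F$ and hence permits $\OPT$ to settle at most $C$ there), apply $\OPT \le V + C < 3V$ to each full epoch, and note that in the final partial epoch $\FA$ settles every transaction, so $\OPT \le \FA$ there; summing over all epochs gives $\CostF{\OPT}{\MDisc{C,k}{T}}{\Tx} \le 3\,\CostF{\FA}{\MDisc{C,k}{T}}{\Tx}$.

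The main obstacle is the structural lemma $V > C/2$ — specifically, extracting the pairwise inequality $R_a + R_b < C/k$ correctly from the first-fit discipline, the key points being that first fit forces a later wallet's first transaction to overflow every earlier wallet and that residuals only shrink within an epoch. A secondary subtlety worth getting right is the placement of the flush-triggering transaction in the epoch decomposition: attaching it to the flush phase (rather than the accumulation phase) keeps the per-epoch bound at exactly $V + C$, which is what makes the ratio come out to $3$ rather than $3 + O(1/k)$.
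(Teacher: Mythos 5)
Your proposal is correct and follows essentially the same route as the paper: the paper's key step is the pairwise bound $v_i + v_j > C/k$ on the committed values of any two wallets (forced by first fit), summed over all $\binom{k}{2}$ pairs to get $V \ge C/2$, which is exactly the dual of your inequality $R_a + R_b < C/k$ on residuals. Your additional observations (every wallet receives a transaction since the triggering transaction has value at most $C/k$, and residuals are non-increasing within an epoch) are implicit in the paper's shorter argument, so there is nothing further to reconcile.
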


\begin{proof}
We use a similar analysis as the proof in Theorem \ref{theorem: FA in general case}.  Time is divided into \emph{epochs}, each of which contains the 
\emph{accumulation phase} and the \emph{flush phase}. For any particular full epoch, let $V$ be the total value packed by $\FA$ into its wallets in the accumulation phase. We note that $V \geq C/2$. To see this, observe that for any pair of wallets $W_i$ and $W_j$ with $i < j$ the final transaction values $v_i$ and $v_j$ of the wallets must satisfy $v_i + v_j > C/k$---otherwise the transactions in the later wallet $j$ would have been placed in the earlier wallet $i$ by first fit. Summing these constraints
\[
    \sum_{i < j} (v_i + v_j) \geq \sum_{i < j} \frac{C}{k} \quad \Rightarrow \quad  
(k-1) \sum_i v_i \geq \frac{k(k-1)}{2}\frac{C}{k} \quad \Rightarrow \quad \sum_i v_i \geq \frac{C}{2}\,.
\]

$\OPT$ can settle at most $V + C$ in this epoch. Considering that $V \geq C/2$, the quantity $V + C \leq 3V$, as desired. It follows that the competitive ratio is $
\alpha \leq 3$ as desired.
\end{proof}

Unfortunately, when $r = 1$, the competitive ratio for $\FWF$ is unbounded.  To see that, again, assume the maximum transaction size and wallet size are both 1. The adversary can produce a series of suitably spaced transactions alternating in value between $\epsilon$ and $1$. $\FWF$ will be forced to take all the $\epsilon$-valued transactions and forgo the high-value transactions, while $\OPT$ can decline to process the low-value transactions in order to process all the high-value ones. Therefore, the competitive ratio would be $1/\epsilon$. This problem can be solved if we pair consecutive wallets and flush each pair when a transaction can not be settled by either of the two wallets.  Within each pair, the second wallet takes a transaction when it is too large for the first wallet. We denote this algorithm as $\FTWF$, for which we have the following result.

\begin{theorem}
\label{theorem: FTWF in extreme case}
When $k > 1$, $\FTWF$ is $2(k+1)/k$-competitive if $r= 1$. 
\end{theorem}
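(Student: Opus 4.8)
The plan is to analyze $\FTWF$ by the very same critical-interval technique used for $\FWF$ in the proof of Theorem~\ref{theorem: FWF in general case}, after re-packaging each consecutive \emph{pair} of wallets as a single composite ``super-wallet'' of capacity $2C/k$ carrying a first-fit rule over its two sub-wallets (fill the first, overflow to the second, flush the pair when a transaction fits neither). For concreteness I would assume $k$ is even, so there are $m := k/2$ super-wallets; with this packaging $\FTWF$ is exactly the round-robin flush-when-full policy over these $m$ super-wallets, and essentially all of the earlier argument transfers verbatim. The only genuinely new ingredient --- and the whole reason the pairing rescues competitiveness at $r=1$ --- is a sharper per-epoch guarantee.

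First I would establish the key lemma: in any \emph{full} super-wallet epoch (one ending in a flush), $\FTWF$ settles total value more than $C/k$, i.e.\ at least one sub-wallet's worth. Such an epoch ends because some transaction $v$ of value $v \le T = C/k$ (using $r=1$) fails to fit into either sub-wallet. Since any transaction of value at most $C/k$ fits into an empty wallet of capacity $C/k$, the second sub-wallet must be non-empty at the flush. Let $u$ be the first transaction routed into the second sub-wallet during the epoch; when $u$ arrived it did not fit into the first sub-wallet, so that sub-wallet's residual was then strictly below $u$, and since residuals are non-increasing within an epoch, the first sub-wallet ultimately settles more than $C/k - u$ while the second settles at least $u$ --- together more than $C/k$. (Treating the pair as an opaque capacity-$2C/k$ wallet gives only $2C/k - T = C/k$ of \emph{guaranteed} headroom, which at $r=1$ is the vacuous bound $\FWF$ suffers; the first-fit structure is what makes the guarantee nontrivial.)

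Then I would run the critical-interval argument. Suppose, for contradiction, there is a time with $V_\OPT((0,t]) > \tfrac{2(k+1)}{k} V_\FTWF((0,t])$ and let $t_e$ be the earliest such $t$. As in Theorem~\ref{theorem: FWF in general case}, minimality forces an unsettled transaction at $t_e$, which in turn forces every one of the $m$ super-wallets to flush during $I_f = (t_e - F, t_e]$. If $t_e \le F$, then $\OPT$ settles at most $C$ on $(0,t_e]$ while $\FTWF$, having completed a full epoch on each super-wallet, settles more than $m \cdot C/k = C/2$; the ratio is below $2 \le \tfrac{2(k+1)}{k}$, a contradiction. Otherwise label the super-wallets $P_1, \ldots, P_m$ in order of flush time; the same argument as before shows $P_2, \ldots, P_m$ each begin within $I_f$ and hence contribute more than $C/k$ apiece. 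Writing $I_s = (t_s, t_{s'}]$ for $P_1$'s last active stretch (last activation $t_s$ to flush $t_{s'} \in I_f$) and $I_{s'} = (t_{s'}, t_e]$, the lemma gives $V_\FTWF(I_s) =: x > C/k$ and $V_\FTWF(I_{s'}) \ge (m-1)\,C/k$; moreover $V_\OPT(I_s) \le V_\FTWF(I_s) = x$ up to an additive $O(1)$ (while $P_1$ is the sole active super-wallet $\FTWF$ serves every arriving transaction except possibly the flush-triggering one), and $V_\OPT(I_{s'}) \le C$ since $|I_{s'}| \le F$. Therefore
\[
  \frac{V_\OPT(I_s \cup I_{s'})}{V_\FTWF(I_s \cup I_{s'})} \;\le\; \frac{x + C}{x + (m-1)\,C/k}\,,
\]
which is decreasing in $x$ and hence at most its value at $x = C/k$, namely $\dfrac{C/k + C}{m\,C/k} = \dfrac{k+1}{m} = \dfrac{2(k+1)}{k}$. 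Combining this with $V_\OPT((0,t_s]) \le \tfrac{2(k+1)}{k} V_\FTWF((0,t_s])$ (minimality, since $t_s < t_e$), exactly as in the $\FWF$ proof, contradicts the choice of $t_e$.

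I expect the main obstacle to be pinning down the \emph{tight} constant rather than the easy $2(k+2)/k$: the naive move of bounding $V_\OPT(I_s)$ by the super-wallet capacity $2C/k$ loses the needed factor, so one must instead keep $V_\FTWF(I_s)$ as the free quantity $x$ in the denominator, use $V_\OPT(I_s) \le V_\FTWF(I_s)$, and only at the end invoke $x > C/k$ from the lemma --- at which point the expression collapses to $2(k+1)/k$. The secondary nuisances --- the single flush-triggering transaction at the boundary of $I_s$, and whether a super-wallet is online the moment the round-robin reaches it --- are handled exactly as in the proof of Theorem~\ref{theorem: FWF in general case}, each costing at most one transaction and thus absorbed into the $O(1)$; the odd-$k$ case is analogous after discarding the leftover wallet (or merging it into the last group).
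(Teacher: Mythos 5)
Your proposal is correct and follows essentially the same route as the paper: reduce to the critical-interval argument of Theorem~\ref{theorem: FWF in general case} with each wallet pair acting as a round-robin unit, and use the key fact that a pair settles more than $C/k$ before it flushes (so $\FTWF$ gets at least $C/2$ over the relevant window while $\OPT$ gets at most $C + C/k$). The paper's proof is only a few lines and simply asserts the $C+C/k$ bound for $\OPT$; your parametrization of $P_1$'s contribution as $x>C/k$ with $V_\OPT(I_s)\le x$ is exactly the detail needed to justify that bound and recover the tight constant $2(k+1)/k$ rather than the weaker $2(k+2)/k$ one would get from the naive $2C/k$ capacity bound.
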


\begin{proof}
The proof is similar to the proof of Theorem \ref{theorem: FWF in general case}.  We use the same notations as before.  Between time interval $(t_0, t]$,
$\FTWF$ can settle transaction value at least $C/2$ since each pair settles at least $C/k$ before they flush, while $\OPT$ settles at most $C+C/k$.  Therefore, 
the competitive ratio is $2(k+1)/k$.
\end{proof}

%%% Local Variables:
%%% mode: latex
%%% TeX-master: "../main"
%%% End:

\section{The General Collateral Setting}
\label{sec:general}
In this section, we study the general model where the entire
collateral $C$ is held in a single pool. A collateral maintenance
policy can replenish any portion of committed collateral (used to
settle a transaction) at any time. Even with this additional
flexibility, a unit of collateral can only be used for settlement once
in a time period of length $F$; it follows that the total settled
value of transactions in any time period of length $F$ is no more than
$C$. Thus, using the same proof as in Theorem~\ref{theorem: FWF in
  general case}, we conclude the following, which shows that $\FWF$ is
competitive even when compared against an adversary who may use the
full power of the general model (while $\FWF$ continues to be
constrained operate in the $k$-wallet discrete model).

\begin{corollary}
  \label{corollary: FWF is competitive in general model}
  Setting $r = kT/C$,
  \[
    \G{C}{T}{\OPT} \leq \frac{k+1}{k(1- r)} \cdot \D{C,k}{T}{\FWF}\,.
  \]
\end{corollary}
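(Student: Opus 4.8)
The plan is to observe that the proof of Theorem~\ref{theorem: FWF in general case} already establishes this stronger statement essentially verbatim, so the real work is to confirm that no step of that argument secretly exploits the $k$-wallet structure of the \emph{offline} benchmark. First I would go through the proof of Theorem~\ref{theorem: FWF in general case} and isolate every place where a property of $\OPT$ is invoked. There are only two kinds: (i) bounds of the form ``$V_\OPT(I) \le C$ whenever $|I| \le F$'' (used for the interval $(0,t_e]$ in the case $t_e \le F$, and again for $I_{s'}$); and (ii) bounds that come purely from the transaction stream together with $\FWF$'s behaviour — namely that during an interval $I_s$ in which $\FWF$'s single active wallet $W_1$ absorbs all arriving transactions but one, $V_\OPT(I_s) \le V_\FWFabbr(I_s) < C/k$.

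Next I would argue that both kinds of bound are valid when $\OPT$ is allowed the full power of the general model $\MGen{C}{T}$. For (i): in $\MGen{C}{T}$ a unit of collateral, once committed, is unavailable for $F$ time regardless of how or when it is later flushed, so the total value settled in any window of length at most $F$ is still at most $C$ — this is exactly the ``$C$ per $F$-period'' observation stated at the start of Section~\ref{sec:general}. For (ii): the inequality $V_\OPT(I_s) \le V_\FWFabbr(I_s)$ is a statement comparing the transaction value merely \emph{present} during $I_s$ to what $\FWF$ packs there, and does not reference $\OPT$'s model at all, while $V_\FWFabbr(I_s) < C/k$ concerns only $\FWF$'s wallet $W_1$. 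Hence every inequality in the proof of Theorem~\ref{theorem: FWF in general case} remains true after replacing $\CostF{\OPT}{\MDisc{C,k}{T}}{\cdot}$ by $\CostF{\OPT}{\MGen{C}{T}}{\cdot}$, and the same ``earliest violating time $t_e$'' contradiction argument goes through, yielding
\[
  \G{C}{T}{\OPT} \le \frac{k+1}{k(1-r)} \cdot \D{C,k}{T}{\FWF}\,.
\]
Since $\D{C,k}{T}{\OPT} \le \G{C}{T}{\OPT}$ always holds (as already noted in Section~\ref{sec:model}), this is a genuine strengthening of Theorem~\ref{theorem: FWF in general case}, not merely a restatement.

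The main obstacle, such as it is, is bookkeeping rather than mathematics: one must check that the decomposition of the bad interval into $I_f$, $I_s$, $I_{s'}$, and the accounting of which $\FWF$ wallets begin settling inside $I_f$, are unaffected — and they are, since that decomposition is driven entirely by $\FWF$'s round-robin schedule and the half-open flush convention, neither of which refers to $\OPT$. I would therefore present the proof as: repeat the proof of Theorem~\ref{theorem: FWF in general case} with the bound ``$V_\OPT(I)\le C$ for $|I|\le F$'' taken in the form guaranteed by the general model, and append the remark $\D{C,k}{T}{\OPT}\le\G{C}{T}{\OPT}$ to explain why the corollary merits a separate statement.
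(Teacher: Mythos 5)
Your proposal matches the paper's own argument: the paper proves this corollary exactly by observing that even in the general model a unit of collateral can settle at most one unit of value per $F$-period, so $V_\OPT(I)\le C$ on any window of length at most $F$, and then invoking the proof of Theorem~\ref{theorem: FWF in general case} unchanged. Your more detailed audit of which inequalities reference $\OPT$'s model is a faithful (and slightly more careful) elaboration of the same reasoning.
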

 
The above result concerns the total transaction value $V$ settled by
an algorithm. As mentioned in the introduction, without further
constraints on the adversary it's clear that the optimal approach (in
the general model) is to immediately flush any collateral used to
settle a transaction. In practice, this is unattractive as there is,
in fact, a cost associated with the (typically on-chain) transaction
used to refresh collateral. To study this, we introduce two new
parameters: (i.) $p$, the profit margin: the algorithm is provided a
reward of $p \cdot v$ for settling a transaction of value $v$, (ii.)
$\tau$, the cost of any flush (regardless of the amount of collateral
involved in the flush operation).

We seek to maximize the total profit with flush cost deducted.
Formally, we would like to find an algorithm that selects transactions
to settle so that $p \cdot V - \tau f$ is maximized, where $V$ is the
total value of settled transactions and $f$ is the total number of
flushes. (Note that by scaling the figure of merit by $1/\tau$, this
is equivalent to maximizing $(p/\tau)V-f$ and it follows that the
single parameter $p/\tau$ suffices; we separate these merely for the
purpose of intuition.) Recall that we use
$\CostF{\A}{\MGen{C;\tau}{T;p}}{\Tx}$ to denote $p V - \tau f$ for an
algorithm $\A$.

Inspired by the algorithm $\FWF$, we consider a family of policies
that flush when the currently committed collateral has reached a
specified fraction of $C$.

\subsection{The Threshold Algorithm $\A_\eta$}
\newcommand{\Commit}{R} This algorithm is parameterized by a threshold
$\eta$ for which $T/C \leq \eta \leq 1$. The behavior of the algorithm
is determined by the running quantity $\Commit$, the current total
collateral that has been committed to settle transactions, but not
(yet) flushed. The algorithm proceeds as follows: When a new
transaction $\tx$ arrives, it is settled if and only if there is
sufficient remaining collateral. Immediately after settling a
transaction, if $R \geq \eta C$ (so that there is at least $\eta C$
committed but unflushed collateral), then it flushes exactly $\eta C$
collateral.

The following analysis derives the competitive ratio of $\A_\eta$ and
then computes the optimal value of $\eta$, denoted by $\eta^*$, that
minimizes this competitive ratio.

\begin{lemma}
\label{lemma: transaction value for eta alg}
    $\displaystyle\G{C}{T}{\OPT} \leq \frac{C}{C-\eta C - T}\G{C}{T}{\A_\eta}$.
\end{lemma}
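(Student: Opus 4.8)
The plan is to use an epoch-based accounting argument parallel to the proof of Theorem~\ref{theorem: FWF in general case}, comparing $\OPT$ and $\A_\eta$ on suitably chosen time intervals. First I would partition time according to the flush events of $\A_\eta$: whenever $\A_\eta$ flushes $\eta C$ collateral, I start a new epoch. In each full epoch, the algorithm accumulates committed-but-unflushed collateral from $0$ up to at least $\eta C$ (the transaction that triggers the flush brings $R$ to at least $\eta C$, and since each transaction has size at most $T$, $R$ is at most $\eta C + T - \varepsilon$ just before the flush, hence at least $\eta C$ when the flush fires). So $\A_\eta$ settles at least $\eta C$ of transaction value per full epoch. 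The key structural observation is that $\A_\eta$ only discards a transaction when there is insufficient remaining collateral, i.e., when the uncommitted collateral $C - R < T$; but by the threshold rule $R < \eta C$ except momentarily, so in fact a discard can only happen when $R$ is close to $\eta C$ and $C - R < T$, which forces $C - \eta C < T$ in the relevant window — more carefully, a discarded transaction of size $v \le T$ has $v > C - R$, and $R < \eta C$ at that moment, so $v > C - \eta C$ only if that inequality is consistent; the clean bound I want is that $\A_\eta$ discards a transaction only when its committed-but-unflushed collateral $R$ satisfies $R > C - T$.

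With that in hand, I would run the same contradiction scheme as Theorem~\ref{theorem: FWF in general case}: suppose there is an interval $I = (0, t_e]$ with $t_e$ minimal such that $V_\OPT(I) > \frac{C}{C - \eta C - T} V_{\A_\eta}(I)$. Minimality forces a transaction at $t_e$ that $\A_\eta$ discards, so at $t_e$ we have $R > C - T$ (using the discard characterization). Let $t_s$ be the last time before $t_e$ that $\A_\eta$ performed a flush (or $0$ if none), so on $(t_s, t_e]$ the committed collateral grew from $0$ (or from whatever remained, which is at most $\eta C$) up to $R > C - T$ without an intervening flush; hence $\A_\eta$ settled more than $C - T - \eta C = C - \eta C - T$ of transaction value on $I' = (t_s, t_e]$. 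Meanwhile $\OPT$ can settle at most $C$ on $I'$, because any window in which the flush delay $F$ has not elapsed allows each unit of collateral to be reused at most once — I need the interval $I'$ to have length at most $F$, which I would argue from the fact that between consecutive flushes of $\A_\eta$ no flushed collateral has come back into play within the same non-flushing stretch (this is the technical point that needs the $F$-delay carefully threaded, analogous to the $I_s, I_{s'}$ split in Theorem~\ref{theorem: FWF in general case}). Then on the prefix $(0, t_s]$ we already have $V_\OPT \le \frac{C}{C-\eta C - T} V_{\A_\eta}$ by minimality of $t_e$, and adding the bound $V_\OPT(I')/V_{\A_\eta}(I') \le C/(C - \eta C - T)$ gives the desired inequality on all of $I$, a contradiction.

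The main obstacle I anticipate is exactly the $F$-delay bookkeeping: showing that within a single non-flushing stretch of $\A_\eta$ the effective collateral available to $\OPT$ is bounded by $C$ (not more), and handling the boundary case where an interval spans part of a flush's re-availability window. In the $k$-wallet proof this was managed by the $I_s$/$I_{s'}$ decomposition, and I expect to need an analogous two-piece split here: one piece where $\A_\eta$'s wallet is "active" and $\OPT$ is bounded by how much $\A_\eta$ itself settled, and one residual piece of length at most $F$ where $\OPT$ is bounded by $C$. A secondary subtlety is the precise discard characterization — pinning down that $\A_\eta$ discards only when $R > C - T$, which requires checking that the threshold flush of exactly $\eta C$ keeps $R$ below $\eta C$ except transiently, and reconciling $\eta C \le C - T$ (equivalently $C - \eta C - T \ge 0$, which is precisely the hypothesis $\eta \le 1 - T/C$ implicit in the denominator being positive). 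Once these are nailed down the arithmetic is routine.
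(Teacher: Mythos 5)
Your overall contradiction scheme (a minimal $t_e$, then bounding the ratio on a trailing window) is the same as the paper's, but the two quantitative claims you build it on are both false, and the failure is structural rather than a loose end. First, $\A_\eta$ does \emph{not} discard only when $R > C-T$: the uncommitted collateral at time $t_e$ is $C - R - O$, where $O$ is the collateral flushed within the last $F$ timeslots and still offline, so a discard only tells you $R + O > C - T$, and $R$ itself can be small. Concretely, take $C=1$, $T=0.1$, $\eta=0.4$, and a rapid stream of size-$0.1$ transactions: after two quick flushes $O=0.8$ is offline, and the next transactions are discarded while $R$ is only about $0.2$. Second, and consequently, your interval $I'=(t_s,t_e]$ starting at the \emph{last} flush is too short: in that same example $\A_\eta$ settles only about $0.2$ on $I'$, far below the $C-\eta C-T=0.5$ your argument requires. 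The value that explains the discard was accrued over \emph{several} inter-flush stretches, not just the final one, so no single-stretch decomposition can work.

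The repair, which is what the paper does, is to take the window $I_f=(t_e-F,t_e]$ of length exactly $F$: every unit of collateral offline at $t_e$ was flushed inside $I_f$, in sequential portions $W_1,\ldots,W_{k-1}$ of size $\eta C$ (plus the residual unflushed portion $W_k$), and each portion only begins accepting transactions once the previous one has flushed. Hence every portion except possibly $W_1$ was filled entirely within $I_f$, and $W_k$ is filled to within $T$ of its capacity, giving $V_{\A_\eta}(I_f)\geq C-\eta C-T$, while $V_\OPT(I_f)\leq C$ because $\lvert I_f\rvert = F$; the case $t_e\leq F$ is handled separately with the cruder bound $V_{\A_\eta}\geq C-T$. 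Note also that the $I_s/I_{s'}$ two-piece split you propose by analogy with Theorem~\ref{theorem: FWF in general case} is not needed here: the paper simply forfeits all of $W_1$'s potential $\eta C$ contribution, which is exactly where the $\eta C$ term in the denominator comes from. The missing idea in your proposal is this ``one window of length $F$ plus sequential $\eta C$-portions'' accounting; without it the discard characterization and the lower bound on $\A_\eta$'s settled value both break.
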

 
 \begin{proof}
   The proof is similar to the proof of Theorem~\ref{theorem: FWF in
     general case}, so we are somewhat more brief. For contradiction,
   assume there is a (first) time $t_e$ for which the interval
   $I = (0, t_e]$ satisfies
   \[
     V_\OPT (I) > C/(C-\eta C - T) \cdot V_{\A_\eta}(I)\,,
   \]
   where $V_\OPT (I)$ and $V_{\A_\eta}(I)$ are the total values of
   transactions $\OPT$ and $\A_\eta$ settle during $I$, respectively.
   % ; let $t_e$ be the earliest such $t$.

   Since $t_e$ is the earliest such time, there must be a transaction
   $\tx$ {\em at} $t_e$ that is not settled by $\A_\eta$. As $\A_\eta$
   does not take $\tx$, there are two possibilities: 1) all collateral
   is offline at $t_e$; 2) the remaining uncommitted collateral is
   insufficient to settle $\tx$.  Let $I_f = (t_e-F, t_e]$. Recall
   that collateral is flushed sequentially in portions of size
   $\eta C$, and that any such portion will only start to take
   transactions after (or at the same time that) the previous portion
   has been flushed. Let $W_k$, refer to the remaining portion of
   unflushed collateral at time $t_e$, if any, and to the last-flushed
   portion of collateral otherwise.  Let $W_1, W_2, \cdots, W_{k-1}$
   refer to the portions of collateral flushed during all prior flush
   events throughout $I_f$. We have $\sum_{i=1}^k W_i = C$.

   If $t_e \leq F$, then $\OPT$ settles transaction value at most $C$
   in the interval $(0, t_e]$. In the same interval, $\A_\eta$ settles
   at least $C-T$ since the uncommitted collateral is no more than
   $T$.  Therefore,
   \[
     \frac{V_\OPT (I)} { V_{\A_\eta}(I)} \leq \frac{C}{C-T} <
     \frac{C}{C-\eta C - T}\,,
   \]
   which would contradict our assumption.

   Otherwise $t_e-F > t_0$. Observe that of the $k$ portions,
   $W_2, W_3, \cdots,$ and $W_k$ began settling transactions during
   $I_f$ since if a portion $W_i$'s transaction activity before its
   last flush starts at a time before $I_f$ for any $i = 2,\cdots, n$,
   then $W_{i-1}$'s last flush time must also be before $I_f$.
  
   The only portion that may have started settling transactions before
     $I_f$ is $W_1$.  Since $W_1$ has size equal $\eta C$ and the uncommitted collateral
     in $W_k$ is at most $T$, $V_{\A_\eta}(I_f) \geq C - \eta C - T$.

   Again, we have $V_\OPT(I_f) \leq C$ since the length of
   $I_f$ is $F$. Therefore,
   \[
     \frac{V_\OPT(I_f)}{V_{\A_\eta}(I_f)}
     \leq \frac{C}{C-\eta C - T}  \,.
   \]
   This contradicts our initial assumption so we conclude that there is
   no such $t_e$.
 \end{proof}

 \begin{theorem}
   Let $p \in (0,1)$ and $\tau > 0$ be a profit margin and flush
   cost. For a threshold $\eta \in (0,1]$ the algorithm $\A_\eta$ is
   $\alpha$-competitive in the $\MGen{C;\tau}{T;p}$ model for
   \[
     \alpha = \frac{1}{1-\eta - T/C} \cdot \frac{p/\tau-1/C}{p/\tau-1/(\eta C)}\,.
   \]
 \end{theorem}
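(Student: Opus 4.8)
The plan is to reduce this bicriteria claim to the pure settled-value bound of Lemma~\ref{lemma: transaction value for eta alg} by pairing it with two elementary ``flush-counting'' estimates, one for $\A_\eta$ and one for $\OPT$. Since $\A_\eta$ never inspects $p$ or $\tau$, its entire trajectory---and in particular its total settled value $V_{\A_\eta}$---is the same whether it is run in $\MGen{C}{T}$ or in $\MGen{C;\tau}{T;p}$, so $\G{C}{T}{\A_\eta} = V_{\A_\eta}$ and Lemma~\ref{lemma: transaction value for eta alg} applies verbatim; moreover $\OPT$'s settled value in the costed model is at most $\G{C}{T}{\OPT}$, so the lemma yields $V_\OPT \le \frac{1}{1-\eta-T/C}\,V_{\A_\eta}$.

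Next I would establish the two payoff estimates. For $\A_\eta$: the committed-but-unflushed amount $R$ stays below $\eta C$ between transactions (it grows by at most $T\le\eta C$ per settlement and is cut by exactly $\eta C$ the moment it reaches that level), so every flush releases precisely $\eta C$ of previously committed collateral; hence the number of flushes $f_{\A_\eta}$ satisfies $f_{\A_\eta}\cdot\eta C \le V_{\A_\eta}$, giving
\[
\CostF{\A_\eta}{\MGen{C;\tau}{T;p}}{\Tx} = p\,V_{\A_\eta} - \tau f_{\A_\eta} \ \ge\ \left(p - \frac{\tau}{\eta C}\right) V_{\A_\eta}\,.
\]
For $\OPT$: a single flush can return at most $C$ collateral to the pool (the pool never holds more than $C$), so the total collateral ever available to $\OPT$ across $f_\OPT$ flushes is at most $C(1+f_\OPT)$, forcing $V_\OPT \le C(1+f_\OPT)$, i.e.\ $f_\OPT \ge V_\OPT/C - 1$, and therefore
\[
\CostF{\OPT}{\MGen{C;\tau}{T;p}}{\Tx} = p\,V_\OPT - \tau f_\OPT \ \le\ \left(p-\frac{\tau}{C}\right)V_\OPT + \tau\,.
\]

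To combine: because $pC>\tau$ we have $p-\tau/C>0$, so I may substitute $V_\OPT \le \frac{1}{1-\eta-T/C}V_{\A_\eta}$ into the $\OPT$ bound, obtaining $\CostF{\OPT}{}{}\le \frac{p-\tau/C}{1-\eta-T/C}\,V_{\A_\eta}+\tau$. One then checks the identity $\alpha\left(p-\frac{\tau}{\eta C}\right)=\frac{p-\tau/C}{1-\eta-T/C}$, which is exactly the definition of $\alpha$ after cancelling a common factor of $\tau$ (i.e.\ $\frac{p/\tau-1/C}{p/\tau-1/(\eta C)}=\frac{p-\tau/C}{p-\tau/(\eta C)}$). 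Using this together with the $\A_\eta$ lower bound and $\alpha>0$ gives
\[
\CostF{\OPT}{\MGen{C;\tau}{T;p}}{\Tx} \ \le\ \alpha\left(p-\frac{\tau}{\eta C}\right)V_{\A_\eta} + \tau \ \le\ \alpha\cdot\CostF{\A_\eta}{\MGen{C;\tau}{T;p}}{\Tx} + \tau\,,
\]
and $\tau$ is a constant depending only on the model parameters, as the competitiveness definition allows.

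The computation itself is routine; the only delicate part is sign bookkeeping. The combine step needs $p-\tau/C>0$ (supplied by the standing assumption $pC>\tau$) so that substituting the value bound preserves the inequality, and it needs $\alpha>0$---equivalently $p\,\eta C>\tau$---so that multiplying the $\A_\eta$ lower bound by $\alpha$ is valid. This last condition is a genuine, if mild, strengthening of $pC>\tau$; I would either fold it into the hypotheses or, more usefully, remark that at the optimal threshold $\eta^{*}=\sqrt{(1-T/C)\,\tau/(pC)}$ it reduces to $p(C-T)>\tau$, which is essentially the same standing assumption with a little slack. I would also re-examine the invariant $R<\eta C$ (using $T/C\le\eta$) to be sure each flush really is of size exactly $\eta C$, since that is what makes the bound $f_{\A_\eta}\le V_{\A_\eta}/(\eta C)$ exact rather than merely approximate.
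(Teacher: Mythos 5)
Your proposal is correct and follows essentially the same route as the paper: invoke Lemma~\ref{lemma: transaction value for eta alg} for the settled-value ratio, bound the flush count of $\A_\eta$ by $V_{\A_\eta}/(\eta C)$ using the fact that every flush releases exactly $\eta C$, lower-bound $\OPT$'s flush count by roughly $V_\OPT/C$, and combine algebraically (the paper absorbs the off-by-one discrepancies into the $O(1)$ term by adopting the convention that both algorithms flush any residual collateral at the end). Your explicit observation that the final substitution requires $p\,\eta C > \tau$ (so that $p - \tau/(\eta C) > 0$ and $\alpha > 0$) is a legitimate refinement that the paper's proof leaves implicit.
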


 \begin{proof}
   For simplicity, assume that at the end of the sequence $\Tx$ any
   committed but unflushed collateral is flushed in both algorithms.
   Note then that the algorithm $\A_\eta$ flushed total collateral
   equal to the total settled value and, furthermore, that each flush
   processes exactly $\eta C$ collateral with the exception of the
   last which may be smaller. It follows that the total number
   of flushes is exactly
   $\lceil{\CostF{\A_\eta}{\MGen{C}{T}}{\Tx}}/{(\eta
         C)}\rceil$.
   We conclude that
   \begin{equation}
     \label{equation: eta alg net profit}
     \begin{split}
       \CostF{\A_\eta}{\MGen{C;\tau}{T;p}}{\Tx} &= p \cdot \CostF{\A_\eta}{\MGen{C}{T}}{\Tx}  -  \tau \cdot \left\lceil \frac{\CostF{\A_\eta}{\MGen{C}{T}}{\Tx}}{\eta C} \right\rceil\\
       &\geq p \cdot \CostF{\A_\eta}{\MGen{C}{T}}{\Tx}  -  \tau \cdot \left( \frac{\CostF{\A_\eta}{\MGen{C}{T}}{\Tx}}{\eta C} + 1\right)\\
                                                &= \CostF{\A_\eta}{\MGen{C}{T}}{\Tx} \left(p - \frac{\tau}{\eta C} \right) - O(1)\,.
     \end{split}
\end{equation}
$\OPT$ flushes at least once when it commits $C$ collateral, therefore
\begin{equation}
  \label{equation: opt net profit}
\CostF{\OPT}{\MGen{C;\tau}{T;p}}{\Tx} \leq p \cdot \CostF{\OPT}{\MGen{C}{T}}{\Tx}  -  \tau \cdot \frac{\CostF{\OPT}{\MGen{C}{T}}{\Tx} }{C} = \CostF{\OPT}{\MGen{C}{T}}{\Tx}(p - \tau/C)\,.
\end{equation}
We combine these to conclude that
\begin{align*}
  \CostF{\OPT}{\MGen{C;\tau}{T;p}}{\Tx} &\leq \CostF{\OPT}{\MGen{C}{T}}{\Tx}(p - \tau/C) \leq \CostF{A_\eta}{\MGen{C}{T}}{\Tx} \frac{C}{C - \eta C - T}(p - \tau/C)\\
  &\leq \CostF{A_\eta}{\MGen{C;\tau}{T;p}}{\Tx} \frac{C}{C - \eta C - T}\cdot\frac{p - \tau/C}{p - \tau/(\eta C)} + O(1)\,,
\end{align*}
as desired. The second inequality holds because of the inequality in Lemma~\ref{lemma: transaction value for eta alg}.
\end{proof}

\noindent\textbf{Optimal value of $\eta$.} The optimal value of $\eta$
(which we denote $\eta^*$) satisfies:
\[
\frac{\partial}{\partial\eta} {\left(\frac{1}{1-\eta - T/C} \cdot \frac{p/\tau-1/C}{p/\tau- 1/(\eta C)}\right)} = 0\,,
\]
which leads to the optimal value $\eta^*$, where $\beta = \tau/(pC)$:
\[
\eta^* = \sqrt{(1 - T/C)\cdot \beta}\,.  
\]

Intuitively, as $\beta$ approaches 0, the flush fee becomes negligible, and the algorithm should flush as often as possible. Using this optimal $\eta^*$, the competitive ratio is $(1-\beta)/(\sqrt{1-T/C}-\sqrt{\beta})^2$, which approaches 1 as $\beta$ approaches 0.  As a final result, we have the following theorem.

 \begin{theorem}
 \label{theorem: eta alg competitive ratio}
 Choosing $\eta = \sqrt{\beta(1-T/C)}$, the competitive ratio for $\A_\eta$ is
 \[
   \frac{1-\beta}{(\sqrt{1-T/C}-\sqrt{\beta})^2}\,.
 \]
  \end{theorem}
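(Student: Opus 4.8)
The plan is to take the competitive ratio $\alpha(\eta)$ of $\A_\eta$ established immediately above, namely
\[
  \alpha(\eta) = \frac{1}{1-\eta - T/C}\cdot\frac{p/\tau-1/C}{p/\tau-1/(\eta C)}\,,
\]
optimize it over the free threshold parameter $\eta$, and substitute the minimizing value back in. The first step is to collapse everything onto the single parameter $\beta = \tau/(pC)$ via $p/\tau = 1/(\beta C)$. Then $p/\tau - 1/C = (1-\beta)/(\beta C)$ and $p/\tau - 1/(\eta C) = (\eta-\beta)/(\eta\beta C)$, and after cancelling the common factor $1/(\beta C)$ the ratio simplifies to
\[
  \alpha(\eta) = \frac{(1-\beta)\,\eta}{(1-T/C-\eta)(\eta-\beta)}\,.
\]

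Write $a = 1-T/C$ and set $g(\eta) := (a-\eta)(\eta-\beta)/\eta$, so that $\alpha(\eta) = (1-\beta)/g(\eta)$; since $1-\beta>0$, minimizing $\alpha$ is the same as maximizing $g$. Expanding gives $g(\eta) = (a+\beta) - a\beta/\eta - \eta$, hence $g'(\eta) = a\beta/\eta^{2} - 1$, which is positive for $\eta < \sqrt{a\beta}$ and negative for $\eta > \sqrt{a\beta}$. Thus $g$ has its unique maximum on $(0,1)$ at $\eta^{*} = \sqrt{a\beta} = \sqrt{(1-T/C)\beta}$, recovering the value claimed in the theorem. (One should also confirm $\eta^{*}$ lies in the admissible range $[T/C,1]$ demanded by the earlier analysis; the upper bound follows from $\beta<1$, which is precisely the standing assumption $pC>\tau$, and the lower bound holds in the micro-transaction regime $T\ll C$ of interest.)

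Finally, evaluating at $\eta^{*}$: since $a\beta/\eta^{*} = \sqrt{a\beta}$, we get $g(\eta^{*}) = a+\beta-2\sqrt{a\beta} = (\sqrt{a}-\sqrt{\beta})^{2}$, so
\[
  \alpha(\eta^{*}) = \frac{1-\beta}{g(\eta^{*})} = \frac{1-\beta}{(\sqrt{1-T/C}-\sqrt{\beta})^{2}}\,,
\]
which is the asserted bound, and letting $\beta\to 0$ shows it tends to $1$. There is no substantive obstacle here beyond careful algebraic bookkeeping; the only points that merit a moment's attention are the sign analysis certifying that $\eta^{*}$ is genuinely a minimizer of $\alpha$ rather than a saddle or maximizer, and the admissibility check $\eta^{*}\in[T/C,1]$ so that the preceding theorem may be invoked at $\eta = \eta^{*}$.
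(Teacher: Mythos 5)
Your proposal is correct and follows the same route the paper takes: the paper simply states that $\eta^*$ is found by setting $\partial\alpha/\partial\eta = 0$ and then substitutes it back, and you have carried out exactly that optimization, just with the algebra written out (the reduction to maximizing $g(\eta)=(a-\eta)(\eta-\beta)/\eta$, the sign analysis of $g'$, and the evaluation $g(\eta^*)=(\sqrt{a}-\sqrt{\beta})^2$ are all accurate). Your added checks that $\eta^*$ is genuinely the minimizer and lies in the admissible range are details the paper glosses over, and they are welcome.
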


%%% Local Variables:
%%% mode: latex
%%% TeX-master: "../main"
%%% End:

\section{Conclusion}
\label{sec:conclusion}
We constructed a modeling framework for collateral management policies of layer-two protocols in the blockchain setting. This framework targets two natural models encountered in practice: the $k$-wallet model in which the collateral $C$ is divided among $k$ wallets, and the general model in which $C$ is viewed as one wallet (or collateral pool). We adopt the standard classical paradigm of competitive analysis in which an online algorithm $\A$, that only knows the transactions encountered so far, is compared against an optimal algorithm $\OPT$ that has full knowledge of the transaction stream including future transactions. Our analysis is agnostic to transaction distribution and only requires knowing the maximum transaction size (i.e., value). Given the dynamic nature of blockchain applications and the unpredictable behavior of their transactions and workload, developing transaction distribution-independent techniques is highly desirable. 

Using our framework, we study natural collateral management policies for the $k$-wallet and the general models, and we show how competitive they are compared to $\OPT$. This is measured in terms of the total transaction value that can be settled and when to replenish the collateral to allow settling future transactions. The general model also studies the replenishment cost and how this affects the utility of the policy. We also derive the optimal configuration for the policy parameters, in terms of the number of wallets and the fraction of the committed collateral to be replenished.

To the best of our knowledge, this work is the first to study the collateral management problem for layer-two protocols. Our future work include extending this model to account for more factors, e.g., transaction specific conditions rather than just a transaction value, and develop dynamic policies in which the number of wallets, and even the collateral value itself, can change over time based on the experienced transaction stream.

\section*{Acknowledgements} 
We thank Mathias Fitzi for conversations that led to the original formulation of these questions. The work of G.A. is supported by NSF Grant No. CNS-2226932.

\bibliographystyle{plain}
\bibliography{walletBib}

\end{document}